\newcommand{\R}{\mathbb{R}}
\newcommand{\BigO}[1]{\ensuremath{\mathcal{O}\left(#1\right)}} 
\newcommand{\BigOm}[1]{\ensuremath{\Omega\left(#1\right)}} 
\newcommand{\BigT}[1]{\ensuremath{\Theta\left(#1\right)}} 
\newcommand{\poly}{\mathrm{poly}}
\newcommand{\vect}[1]{\ensuremath{\mathbf{#1}}}            
\newcommand{\mat}[1]{\ensuremath{\mathbf{\MakeUppercase{#1}}}}   
\newcommand{\Exp}[2]{\ensuremath{\mathbb{E}_{#1}\left[#2\right]}}                
\newcommand{\Ind}[1]{\ensuremath{\mathbf{1}\left[#1\right]}}                     
\newcommand{\Norm}[1]{\ensuremath{\left\lVert #1 \right\rVert}}                  
\newcommand{\NormI}[1]{\ensuremath{\left\lVert #1 \right\rVert}_1}                 
\newcommand{\NormII}[1]{\ensuremath{\left\lVert #1 \right\rVert}_2}                
\newcommand{\NormInfty}[1]{\ensuremath{\left\lVert #1 \right\rVert_{\infty}}}    
\newcommand{\InNorm}[1]{{\left\vert\kern-0.2ex\left\vert\kern-0.2ex\left\vert #1 
    \right\vert\kern-0.2ex\right\vert\kern-0.2ex\right\vert}}                    
\newcommand{\InNormII}[1]{{\left\vert\kern-0.2ex\left\vert\kern-0.2ex\left\vert #1 
    \right\vert\kern-0.2ex\right\vert\kern-0.2ex\right\vert}_2}                    
\newcommand{\InNormInfty}[1]{{\left\vert\kern-0.2ex\left\vert\kern-0.2ex\left\vert #1 
    \right\vert\kern-0.2ex\right\vert\kern-0.2ex\right\vert}_{\infty}}           
\newcommand{\Abs}[1]{\ensuremath{\left \lvert #1 \right \rvert}}                 
\newcommand{\Prob}[1]{\ensuremath{\mathrm{Pr}\left\{ #1 \right\}}}               
\newcommand{\Grad}{\nabla}                                                       
\newcommand{\Land}{\wedge}                                                        
\newcommand{\defeq}{\doteq}                                         
\DeclareMathOperator*{\argmin}{argmin}
\DeclareMathOperator*{\argmax}{argmax}
\DeclareMathOperator*{\sign}{sign}
\newtheorem{assumption}{Assumption}
\newtheorem{lemma}{Lemma}
\newtheorem{theorem}{Theorem}
\newtheorem{remark}{Remark}
\newcommand{\x}{\mathbf{x}}
\newcommand{\w}{\mathbf{w}}
\newcommand{\win}{\ensuremath{\w_{-i}}} 
\newcommand{\wtin}{\ensuremath{{\w^*_{-i}}}} 
\newcommand{\whin}{\ensuremath{\widehat{\w}_{-i}}} 
\newcommand{\vv}{\ensuremath{\mathbf{v}}} 
\newcommand{\vi}{\ensuremath{\mathbf{v}_i}} 
\newcommand{\vhi}{\ensuremath{\mathbf{\widehat{v}}_i}} 
\newcommand{\vt}{\ensuremath{\mathbf{v^*}}} 
\newcommand{\vti}{\ensuremath{{\mathbf{v}^*_i}}} 
\newcommand{\vh}{\ensuremath{\widehat{\mathbf{v}}}} 
\newcommand{\zv}{\ensuremath{\vect{z}}} 
\newcommand{\zl}{\ensuremath{{\vect{z}^{(l)}}}} 
\newcommand{\zi}{\ensuremath{\vect{z}_i}} 
\newcommand{\zil}{\ensuremath{\vect{z}_i^{(l)}}} 
\newcommand{\zsl}{\ensuremath{{\vect{z}^{(l)}_S}}} 
\newcommand{\zs}{\ensuremath{\vect{z}_S}} 
\newcommand{\xin}{\ensuremath{\x_{-i}}} 
\newcommand{\NE}{\ensuremath{\mathcal{NE}}}
\newcommand{\Data}{\ensuremath{\mathcal{D}}}
\newcommand{\Game}{\ensuremath{\mathcal{G}}}
\newcommand{\Sc}{\ensuremath{S^c}} 
\newcommand{\vs}{\ensuremath{\vect{v}_{S}}} 
\newcommand{\vhs}{\ensuremath{\vh_{S}}} 
\newcommand{\vts}{\ensuremath{{\vect{v}^*_{S}}}} 
\newcommand{\loss}{\ensuremath{\ell}} 
\newcommand{\Hess}{\ensuremath{\nabla^2}} 
\newcommand{\Rad}{\mathfrak{R}} 
\newcommand{\ds}{\Delta_S} 
\newcommand{\dhs}{\widehat{\Delta}_S} 
\newcommand{\dbs}{\Delta_{\overline{S}}} 
\newcommand{\dbsc}{\Delta_{\overline{\Sc}}} 
\newcommand{\yv}{\vect{y}} 
\newcommand{\mumax}{\mu_{\mathrm{max}}}
\newcommand{\mumin}{\mu_{\mathrm{min}}}
\newcommand{\eigmin}{\lambda_{\mathrm{min}}}
\newcommand{\eigmax}{\lambda_{\mathrm{max}}}
\newcommand{\cmin}{C_{\mathrm{min}}}
\newcommand{\dmax}{D_{\mathrm{max}}}
\newcommand{\A}{\mathcal{A}} 
\newcommand{\U}{\mathcal{U}} 
\newcommand{\Nh}{\mathcal{N}} 
\newcommand{\rhomin}{\rho_{\mathrm{min}}}
\newcommand{\vtbs}{\vect{v}^*_{\overline{S}}}
\begin{document}

\title{From Behavior to Sparse Graphical Games: Efficient Recovery of Equilibria}

\author{Asish Ghoshal and Jean Honorio\\
Department of Computer Science\\
Purdue University\\
West Lafayette, IN - 47906\\
\{aghoshal, jhonorio\}@purdue.edu}

\date{}

\maketitle

\begin{abstract}
In this paper we study the problem of exact recovery of the pure-strategy
Nash equilibria (PSNE) set of a graphical game 
from noisy observations of joint actions of the players alone.
We consider sparse linear influence games ---
a parametric class of graphical games with linear payoffs, and
represented by directed graphs of $n$ nodes (players) and in-degree of at most $k$.
We present an $\ell_1$-regularized logistic regression
based algorithm for recovering the PSNE set exactly, that is both 
computationally efficient --- i.e. runs in polynomial time --- 
and statistically efficient --- i.e. has logarithmic sample complexity. Specifically,
we show that the sufficient number of samples required for exact PSNE recovery
scales as $\BigO{\poly(k) \log n}$. We also validate our theoretical results
using synthetic experiments. 
\end{abstract}
\section{Introduction and Related Work}
\label{sec:introduction}
Non-cooperative game theory is widely regarded as an appropriate
mathematical framework for studying \emph{strategic} behavior in multi-agent scenarios.
The core solution concept of \emph{Nash equilibrium} 
describes the stable outcome of the overall behavior of self-interested agents  
--- for instance people, companies, governments, groups or autonomous systems ---
interacting strategically with each other and in distributed settings.

Over the past few years, considerable progress has been made in analyzing 
behavioral data using game-theoretic tools, e.g. 
computing Nash equilibria \cite{blum2006continuation,ortiz2002nash,vickrey2002multi},
most influential agents \cite{irfan14}, price of anarchy \cite{ben2011local} 
and related concepts in the context of graphical games.
In \emph{political science} for instance, Irfan and Ortiz \cite{irfan14} 
identified, from congressional voting records, the most influential senators in the U.S. congress --- 
a small set of senators whose collective behavior forces every other senator to a unique choice of vote.
Irfan and Ortiz \cite{irfan14} also observed that 
the most influential senators were strikingly similar to the gang-of-six senators, 
formed during the national debt ceiling negotiations of 2011.
Further, using graphical games,  Honorio and Ortiz \cite{honorio15} 
showed that Obama's influence on Republicans increased in the last sessions before candidacy, 
while McCain's influence on Republicans decreased.

The problems in \emph{algorithmic game theory} described above, i.e. 
computing the Nash equilibria, computing the price of anarchy or finding the most influential agents,
require a known graphical game which is not available apriori in real-world settings.
Therefore, Honorio and Ortiz \cite{honorio15} proposed learning graphical games from behavioral data,
using maximum likelihood estimation (MLE) and \emph{sparsity}-promoting methods.
Honorio and Ortiz \cite{honorio15} and Irfan and Ortiz 
\cite{irfan14} have also demonstrated the usefulness of learning \emph{sparse} graphical games from behavioral data
in real-world settings, through their analysis of
the voting records of the U.S. congress as well as the U.S. supreme court.

In this paper, we analyze a particular method proposed by Honorio and Ortiz \cite{honorio15} for learning
sparse linear influence games, namely, using logistic regression for learning
the neighborhood of each player in the graphical game, independently. Honorio and Ortiz \cite{honorio15}
showed that the method of independent logistic regression is likelihood consistent; i.e. in the
infinite sample limit, the likelihood estimate converges to the true achievable likelihood. In
this paper we obtain the stronger guarantee  of recovering the true PSNE set exactly. Needless
to say, our stronger guarantee comes with additional conditions that the true game must satisfy
in order to allow for exact PSNE recovery. The most crucial among our conditions is the assumption
that the minimum payoff across all players and all joint actions in the true PSNE set is strictly positive.
We show, through simulation experiments, that our assumptions indeed bears out for large classes of graphical games, where we
are able to exactly recover the PSNE set. 

Finally, we would like to draw the attention of the reader to the fact that $\ell_1$-regularized logistic
regression has been analyzed by Ravikumar et. al. \cite{Ravikumar2010} in the context of learning
sparse Ising models. Apart from technical differences and differences in proof techniques,
our analysis of $\ell_1$-penalized logistic regression
for learning sparse graphical games differs from Ravikumar et. al. \cite{Ravikumar2010} conceptually --- in the sense
that we are not interested in recovering the edges of the true game graph, but only the PSNE set. Therefore,
we are able to avoid some stronger and non-intuitive conditions required by Ravikumar et. al. \cite{Ravikumar2010}, 
such as mutual incoherence.

The rest of the paper is organized as follows.
In section \ref{sec:prelim} we provide a brief overview of graphical games and, specifically, linear influence games.
We then formalize the problem of learning linear influence games in section \ref{sec:problem_formulation}.
In section \ref{sec:results}, we present our main method and associated theoretical results. Then, in
section \ref{sec:experiments} we provide experimental validation of our theoretical guarantees. Finally,
in section \ref{sec:conclusion} we conclude by discussing avenues for future work.

\section{Preliminaries}
\label{sec:prelim}
In this section we review key concepts behind graphical games introduced by Kearns et. al. \cite{kearns01}
and linear influence games (LIGs) introduced by Irfan and Ortiz \cite{irfan14} and
Honorio and Ortiz \cite{honorio15}.
\subsection{Graphical Games}
A \emph{normal-form game} $\Game$ in classical game theory is defined by the triple
$\Game = (V, \A, \U)$ of players, actions and payoffs. $V$ is the set of players, 
and is given by the set $V = \{1,\ldots,n\}$, if there are $n$ players. $\A$
is the set of actions or \emph{pure-strategies} and is given by the Cartesian product
$\A \defeq \times_{i \in V} A_i$, where $A_i$ is the set of pure-strategies of the $i$-th player.
Finally, $\U \defeq \{u_i\}_{i=1}^n$, is the set of payoffs,
 where $u_i: A_i \times_{j \in V\setminus i} A_j \rightarrow \R$ specifies the payoff for the $i$-th player given
 its action and the joint actions of the all the remaining players. 

\sloppy
A key solution concept in non-cooperative game theory is the \emph{Nash equilibrium}. 
For a non-cooperative game, a joint action $\x^* \in \A$ is a pure-strategy Nash equilibrium (PSNE)
if, for each player $i$, $x_i^* \in \argmax_{x_i \in A_i} u_i(x_i, \x^*_{-i})$, where 
$\x^*_{-i} = \{x^*_j \vert j \neq i\}$. In other words,
$\x^*$ constitutes the mutual best-response for all players and no player has any incentive to 
unilaterally deviate from their optimal action $x^*_i$ given the joint actions of the remaining players $\x^*_{-i}$. 
The set of all \emph{pure-strategy Nash equilibrium} (PSNE) for a game $\Game$ is defined as follows:
\begin{align}
\NE(\Game) = \left\{\x^* \big| (\forall i \in V)\; x^*_i \in 
	\argmax_{x_i \in A_i} u_i(x_i, \x^*_{-i}) \right\}. \label{eq_psne_set}
\end{align}
\fussy

\emph{Graphical games}, introduced by Kearns et. al. \cite{kearns01},
extend the formalism of \emph{Graphical models} to games. That is, a graphical game 
$G$ is defined by the \emph{directed graph}, $G = (V, E)$, of vertices and 
directed edges (arcs), where vertices correspond to players and arcs
encode ``influence'' among players i.e. the payoff of  
the $i$-th player only depends on the actions of its (incoming) neighbors.

\subsection{Linear Influence Games}
Linear influence games (LIGs), introduced by Irfan and Ortiz \cite{irfan14} and Honorio and Ortiz \cite{honorio15},
are graphical games with binary actions, or pure strategies, and parametric (linear) payoff functions.
We assume, without loss of generality, that the joint action space $\A = \{-1, +1\}^n$.
A linear influence game between $n$ players, $\Game(n) = (\mat{W}, \vect{b})$, is characterized by
(i) a matrix of weights $\mat{W} \in \R^{n \times n}$, where the entry $W_{ij}$ indicates the amount of influence (signed) that the $j$-th player has on the $i$-th player and (ii) a bias vector $\vect{b} \in \R^n$, where $b_i$ captures
the prior preference of the $i$-th player for a particular action $x_i \in \{-1, +1\}$. The
payoff of the $i$-th player given the actions of the remaining players is then given as
$u_i(x_i, \xin) = x_i(\win^T\xin - b_i) $, and the PSNE set is defined as follows:
\begin{align}
\NE(\Game(n)) = \left\{\x | (\forall i)\; x_i(\win^T\xin - b_i) \geq 0 \right\} \label{eq:lig_psne},
\end{align}
where $\win$ denotes the $i$-th row of $\mat{W}$ without the $i$-th entry, i.e.
$\win = \{w_{ij} \vert j \neq i \}$. Note that we have $\mathrm{diag}(\mat{W}) = 0$.
For linear influence games $\Game(n)$, we can
define the neighborhood and signed neighborhood of the $i$-th vertex as 
$\Nh(i) = \{j | \Abs{w_{i,j}} > 0\}$ and $\Nh_{\pm}(i) = \{\sign(w_{ij}) | j \in \Nh(i)\}$
respectively.
It is important to note that we don't include the outgoing arcs in our definition of
the neighborhood of a vertex.
Thus, for linear influence games, the weight matrix $\mat{W}$ and 
the bias vector $\vect{b}$, completely specify the game and the PSNE set
induced by the game. Finally, let $\Game(n, k)$ denote sparse games 
over $n$ players where the in-degree of any vertex is at most $k$, i.e. for all $i$, $\Abs{\Nh_i} \leq k$.

\section{Problem Formulation}
\label{sec:problem_formulation}
Having introduced the necessary definitions, we now introduce the problem of learning
sparse linear influence games from observations of joint actions only. We assume that
there exists a game $\Game^*(n, k) = (\mat{W}^*, \vect{b}^*)$ from which a ``noisy'' 
data set $\Data = \{\x^{(l)}\}_{l=1}^m$ of $m$ observations is generated,
where each observation $\x^{(l)}$ is sampled independently and
identically from the following distribution:
\begin{align}
p(\x) = \frac{q\Ind{\x \in \NE^*}}{|\NE^*|} 
	+ \frac{(1 - q)\Ind{\x \notin \NE^*}}{2^n - |\NE^*|} \label{eq:obs_model}.
\end{align}
In the above distribution, $q$ is the probability of observing a data point from the set of Nash equilibria
and can be thought of as the ``signal'' level in the data set, while $1 - q$ can be thought of as
the ``noise'' level in the data set. We use $\NE^*$ instead
of $\NE(\Game^*(n, k))$ to simplify notation. We further assume that the game is 
non-trivial\footnote{This comes from Definition 4 in \cite{honorio15}.}
i.e. $\Abs{\NE^*} \in \{1, \ldots 2^n - 1\}$ and that $q \in (\nicefrac{\Abs{\NE^*}}{2^n}, 1)$.
The latter assumption ensures that the signal level in the data set is more than the noise level
\footnote{See Proposition 5 and Definition 7 in \cite{honorio15} for a justification of this.}.
We define the equality of two games $\Game^*(n, k) = (\mat{W}^*, \vect{b}^*)$
and $\widehat{\Game}(n, k) = (\widehat{\mat{W}}, \widehat{\vect{b}})$ as follows:
\begin{gather*}
\Game^*(n, k) = \widehat{\Game}(n, k) \text{ iff } 
(\forall i)\; \Nh^*_{\pm}(i) = \widehat{\Nh}_{\pm}(i)\; \Land 
\sign(b^*_i) = \sign(\widehat{b}_i)\; \Land \; \NE^* = \widehat{\NE}.
\end{gather*}
A natural question to ask then is that, given only the data set $\Data$ and no other
information, is it possible to recover the weight matrix $\widehat{\mat{W}}$ 
and the bias vector $\widehat{\vect{b}}$ such that $\widehat{\Game}(n, k) = \Game^*(n, k)$?
Honorio and Ortiz \cite{honorio15} showed that it is in general
impossible to learn the true game $\Game^*(n, k)$ from observations of joint actions only
because multiple weight matrices $\mat{W}$ and bias vectors $\vect{b}$ can induce the same 
PSNE set and therefore have the same likelihood under the observation model \eqref{eq:obs_model} ---
an issue known as non-identifiablity in the statistics literature.
It is, however, possible to learn the equivalence class of games
that induce the same PSNE set. We define the equivalence of two games $\Game^*(n, k)$
and $\widehat{\Game}(n, k)$ simply as :
\begin{gather*}
\Game^*(n, k) \equiv \widehat{\Game}(n, k) \text{ iff } \NE^* = \widehat{\NE}.
\end{gather*}
Therefore, our goal in this paper is efficient 
and consistent recovery of the pure-strategy Nash equilibria set (PSNE) from observations
of joint actions only; i.e. given a data set $\Data$, drawn from some game $\Game^*(n, k)$
according to \eqref{eq:obs_model}, we infer a game $\widehat{\Game}(n, k)$ from $\Data$
such that $\widehat{\Game}(n, k) \equiv \Game^*(n, k)$.
\section{Method and Results}
\label{sec:results}
Our main method for learning the structure of a sparse LIG, $\Game^*(n,k)$,
is based on using $\ell_1$-regularized logistic regression, to learn
the parameters $(\win, b_i)$ for each player $i$ independently.
We denote by $\vi(\mat{W}, \vect{b}) = (\win, -b_i)$ the parameter vector 
for the $i$-th player, which characterizes its
payoff; and by $\zi(\x) = (x_i \xin,~ x_i)$ the ``feature'' vector. In the rest of the paper we
use $\vi$ and $\zi$ instead of $\vi(\mat{W}, \vect{b})$ and $\zi(\x)$ respectively, to simplify notation.
Then, we learn the parameters for the $i$-th player as follows:
\begin{align}
	\vhi &= \argmin_{\vi} \loss(\vi, \Data) + \lambda \NormI{\vi} \label{eq:optimization} \\
	\loss(\vi, \Data) &= \frac{1}{m} \sum_{l=1}^m \log(1 + \exp(-\vi^T\zil)) \label{eq:loss}.
\end{align}
We then set $\whin = [\vhi]_{1:(n - 1)}$ and $\widehat{b}_i = -[\vhi]_n$, where the notation
$[.]_{i:j}$ denotes indices $i$ to $j$ of the vector.  
We show that, under suitable assumptions on the true game $\Game^*(n,k) = (\mat{W}^*, \vect{b}^*)$, the parameters $\widehat{\mat{W}}$ and $\widehat{\vect{b}}$ obtained using \eqref{eq:loss} induce the same PSNE set as the true game, i.e. $\NE(\mat{W}^*, \vect{b}^*) = 
\NE(\widehat{\mat{W}}, \widehat{\vect{b}})$.
Before presenting our main results, however, we state and discuss the assumptions
on the true game under which it is possible to recover the true PSNE set of the game
using our proposed method.
\subsection{Assumptions}
The success of our method hinges on certain assumptions on the structure of the underlying game. These assumptions are in addition to the assumptions imposed on the parameter $q$ and the number of Nash equilibria of the game $\Abs{\NE^*}$, as required by definition of a LIG. Since the assumptions are related to the Hessian of the loss function \eqref{eq:loss}, we take a moment to introduce the expressions of the gradient and the Hessian here. The gradient and Hessian of the loss
function for any vector $\vv$ and the data set $\Data$ is given as follows:
\begin{align}
\Grad \loss(\vect{v}, \Data) = \frac{1}{m} 
	\sum_{l=1}^m\left\{ \frac{- \zl}{1 + \exp(\vect{v}^T \zl)} \right\} \label{eq:grad} \\
\Hess \loss(\vect{v}, \Data) = \frac{1}{m}  	
	\sum _{l=1}^m \eta(\vv^T \zl) \zl \zl^T,
\end{align}
where $\eta(x) = \nicefrac{1}{(e^{x/2} + e^{-x/2})^2}$. Finally, $\mat{H}^m_i$ denotes
the sample Hessian matrix with respect to the $i$-th player and the true parameter $\vti$, and  $\mat{H}_i^*$ denotes it's expected value, i.e. $\mat{H}_i^* \defeq \Exp{\Data}{\mat{H}^m_i} = 
\Exp{\Data}{\Hess \loss(\vti, \Data)}$.
In subsequent sections we drop the notational dependence of $\mat{H}_i^*$ and $\zi$ on $i$ to simplify notation.
However, it should be noted that the assumptions are with respect to
each player and must hold individually for each player $i$.
Now we are ready to state and describe the implications of our assumptions.

The following assumption ensures that the expected
loss is strongly convex and smooth.
\begin{assumption}
\label{ass:eigenvalue}
Let $S$ be the support of the vector $\vv$, i.e. $S \defeq \{i \vert \Abs{v_i} > 0 \}$.
Then, there exists constants $\cmin > 0$ and $\dmax \leq |S|$ such that:
\begin{align*}
\eigmin(\mat{H}_{SS}^*) \geq \cmin \text{ and } \eigmax(\Exp{\x}{\zv_S \zv_S^T}) \leq \dmax,
\end{align*}
for all $i$, where $\eigmin(.)$ and $\eigmax(.)$ denote the minimum and maximum eigenvalues respectively and $\mat{H}^*_{SS} = \{H^*_{i,j} | i, j \in S \}$.
\end{assumption}
To better understand the implications of the above assumptions first
note that the distribution over joint actions in \eqref{eq:obs_model}, after some algebraic manipulations, can be written as follows:
\begin{align}
p(\x) = \left(\frac{q - \nicefrac{\Abs{\NE^*}}{2^n}}{1 - \nicefrac{\Abs{\NE^*}}{2^n}}\right) \frac{\Ind{\x \in \NE^*}}{|\NE^*|} 
 \left(\frac{1 - q}{1 - \nicefrac{|\NE^*|}{2^n}}\right) \frac{1}{2^n} \label{eq:trans_obs_model}.
\end{align}
Thus, we have that the distribution over joint actions is a mixture of two uniform distributions,
one over the number of Nash equilibria and the other being the Rademacher distribution over $n$ variables.
We denote the latter distribution by $\Rad^n$, i.e. $\x \sim \Rad^n \implies \x \in \{-1, +1\}^n \Land p(\x) = \nicefrac{1}{2^n}$
for all $\x$.
Therefore the expected Hessian matrix $\mat{H}^*$ decomposes as a convex combination of two other Hessian matrices:
\begin{align}
\mat{H}^* = \nu \mat{H}^{\NE^*} + (1 - \nu) \mat{H}^{\Rad}, \label{eq:hess_comb}
\end{align}
where we have defined:
\begin{align*}
\mat{H}^{\NE*} &\defeq \frac{1}{\Abs{\NE^*}} \sum_{\x \in \NE^*} \eta(\vt^T \zv) \zv \zv^T, \\
\mat{H}^{\Rad} &\defeq \Exp{\zv \sim \Rad^n}{\eta(\vt^T \zv) \zv \zv^T}, \\
\nu &\defeq \left(\frac{q - \nicefrac{\Abs{\NE^*}}{2^n}}{1 - \nicefrac{\Abs{\NE^*}}{2^n}}\right).
\end{align*}
Now, by concavity of $\eigmin(.)$ and the Jensen's inequality we have that
\begin{align*}
\cmin  &\geq \nu \eigmin(\mat{H}^{\NE^*}_{SS}) + (1 - \nu) \eigmin(\mat{H}^{\Rad}_{SS}) \\
&\geq \nu \eigmin(\mat{H}^{\NE^*}_{SS}) + (1 - \nu) \eta(\NormI{\vt}) \\
&\geq (1 - \nu) \eta(\NormI{\vt}) > 0,
\end{align*}
where the last line follows from the fact that $\mat{H}^{\NE^*}_{SS}$ is positive semi-definite.
In fact the minimum eigenvalue, $\eigmin(\mat{H}^{\NE^*}_{SS})$, is zero if $\Abs{\NE^*} < n$.
Thus, we have that our assumption of $q \in (\nicefrac{\NE^*}{2^n}, 1)$ automatically
implies that $\eigmin(\mat{H}_i^*) \geq \cmin > 0$. 
We can also verify that the maximum eigenvalue is bounded as follows:
\begin{align*}
\dmax &\leq \nu \eigmax\left(\frac{1}{\Abs{\NE^*}} \sum_{\x \in \NE^*} \zv_S \zv_S^T \right) + 
	 (1 - \nu) \eigmax(\Exp{\zv \sim \Rad^n}{\zv_S \zv_S^T}) \\
&\leq \nu \Abs{S} + (1 - \nu).
\end{align*}

The following assumption characterizes the minimum payoff in the Nash equilibria set.
\begin{assumption}
\label{ass:payoff}
The minimum payoff in the PSNE set, $\rhomin$, is strictly positive, specifically:
\begin{align*}
x_i(\wtin^T \xin - b_i) \geq \rhomin > \nicefrac{5 \cmin}{\dmax}
                && (\forall\; \vect{x} \in \NE^*).
\end{align*}
\end{assumption}
Note that as long as the minimum payoff is strictly positive, we
can scale the parameters $(\mat{W}^*, \vect{b}^*)$ by the constant
$\nicefrac{5 \cmin}{\dmax}$ to satisfy the condition: $\rhomin > \nicefrac{5 \cmin}{\dmax}$,
without changing the PSNE set or the likelihood of the data.
Indeed the assumption that the minimum payoff is strictly positive is
is unavoidable for exact recovery of the PSNE set in a 
noisy setting such as ours, because otherwise this is akin to exactly recovering
the parameters $\vv$ for each player $i$. For example, if $\x \in \NE^*$ is such that
$\vt^T \x = 0$, then it can be shown that even if $\NormInfty{\vt - \vh} = \varepsilon$,
for any $\varepsilon$ arbitrarily close to $0$, then $\vh^T \x < 0$ and therefore 
$\NE(\mat{W}^*, \vect{b}^*) \neq \NE(\widehat{\mat{W}}, \widehat{\vect{b}})$.
Next, we present our main theoretical results for learning LIGs.

\subsection{Theoretical Guarantees}
Our main strategy for obtaining exact PSNE recovery guarantees is to first show,
using results from random matrix theory, that given the 
assumptions on the eigenvalues of the population Hessian matrices, the assumptions hold in the finite sample case with high probability. Then, we exploit the convexity properties of the logistic loss function to show that the weight vectors learned using penalized logistic regression is ``close'' to the true weight vectors. By our assumption that the minimum payoff in the PSNE set is strictly greater than zero, we show that the weight vectors inferred from a finite sample of joint actions
induce the same PSNE set as the true weight vectors.

\subsubsection{Minimum and Maximum Eigenvalues of Finite Sample Hessian and Scatter Matrices}
The following technical lemma shows that the assumptions on the eigenvalues of the Hessian
matrices, hold with high probability in the finite sample case.
\begin{lemma}
\label{lemma:eigvalue}
If $\eigmin(\mat{H}^*_{SS}) \geq \cmin$ and $\eigmax(\Exp{\x}{\zv_S \zv_S^T}) \leq \dmax$
then we have that 
\begin{gather*}
\eigmin(\mat{H}^m_{SS}) \geq \frac{\cmin}{2} \text{ and }
\eigmax\left(\sum_{l=1}^m \zsl \zsl^T \right) \leq 2 \dmax
\end{gather*}
with probability at least 
\begin{align*}
1 - \Abs{S} \exp \left(\frac{-m \cmin}{2 \Abs{S}}\right) \text{ and }
1 - \Abs{S} \exp\left( \frac{- m (1 - \nu)}{4\Abs{S}} \right)
\end{align*}
respectively.
\end{lemma}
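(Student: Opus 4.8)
The plan is to observe that both finite-sample objects are averages of $m$ independent, identically distributed, positive-semidefinite rank-one matrices, and then to apply the matrix Chernoff inequality from random matrix theory, which controls the extreme eigenvalues of such sums. Write $\mat{H}^m_{SS}=\frac{1}{m}\sum_{l=1}^m \eta(\vt^T\zsl)\,\zsl\zsl^T$ and let the empirical scatter matrix be $\frac{1}{m}\sum_{l=1}^m \zsl\zsl^T$; the summands are independent because the $\x^{(l)}$ are drawn i.i.d. from \eqref{eq:obs_model}. Applying matrix Chernoff requires, in each case, (i) an almost-sure bound on the top eigenvalue of a single summand and (ii) control of the relevant eigenvalue of the expectation. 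The structural fact I will use repeatedly is that every coordinate of $\zv$ is a product of $\pm 1$ entries, so $\zsl\zsl^T$ is rank one with its only nonzero eigenvalue equal to $\Norm{\zsl}^2=\Abs{S}$ almost surely.

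For the minimum eigenvalue of the Hessian I would use $\eta(x)=\nicefrac{1}{(e^{x/2}+e^{-x/2})^2}\leq\nicefrac{1}{4}$, so each summand has top eigenvalue at most $\nicefrac{\Abs{S}}{4}$, while $\Exp{\Data}{\mat{H}^m_{SS}}=\mat{H}^*_{SS}$ has minimum eigenvalue at least $\cmin$ by hypothesis. The lower-tail matrix Chernoff bound with deviation parameter $\delta=\nicefrac{1}{2}$ then gives
\begin{align*}
\Prob{\eigmin(\mat{H}^m_{SS})\leq\tfrac{\cmin}{2}}\leq\Abs{S}\exp\!\left(\frac{-m\,\cmin}{2\Abs{S}}\right),
\end{align*}
which is exactly the first claim.

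For the maximum eigenvalue each summand $\zsl\zsl^T$ has top eigenvalue exactly $\Abs{S}$. The upper-tail bound is governed by $\mu_{\max}\defeq\eigmax(\Exp{\x}{\zs\zs^T})$, and the subtlety is that I need a \emph{lower} bound on $\mu_{\max}$ for the exponent to be useful. This is where the mixture form \eqref{eq:trans_obs_model} enters: decomposing $\Exp{\x}{\zs\zs^T}=\nu\,\frac{1}{\Abs{\NE^*}}\sum_{\x\in\NE^*}\zs\zs^T+(1-\nu)\,\Exp{\zv\sim\Rad^n}{\zs\zs^T}$ and using that the coordinates of $\zv$ are pairwise uncorrelated under $\Rad^n$ so that $\Exp{\zv\sim\Rad^n}{\zs\zs^T}=\mat{I}$, positive-semidefiniteness of the first term yields $\mu_{\max}\geq 1-\nu$. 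Combining this with $\mu_{\max}\leq\dmax$ (so that $\{\eigmax\geq 2\dmax\}\subseteq\{\eigmax\geq 2\mu_{\max}\}$), the upper-tail matrix Chernoff bound with $\delta=1$ together with the estimate $\nicefrac{e}{4}\leq e^{-\nicefrac{1}{4}}$ gives
\begin{align*}
\Prob{\eigmax\!\left(\tfrac{1}{m}\textstyle\sum_{l=1}^m\zsl\zsl^T\right)\geq 2\dmax}\leq\Abs{S}\exp\!\left(\frac{-m(1-\nu)}{4\Abs{S}}\right),
\end{align*}
which is the second claim; a union bound over the two events finishes the proof.

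The main obstacle I anticipate is the second statement rather than the first. The upper-tail matrix Chernoff exponent scales with $\mu_{\max}$, so a naive application using only $\mu_{\max}\leq\dmax$ produces a vacuous bound; the crucial move is to extract the lower bound $\mu_{\max}\geq 1-\nu$ from the Rademacher component of the mixture while using $\mu_{\max}\leq\dmax$ only for the event-inclusion step. Verifying $\Exp{\zv\sim\Rad^n}{\zs\zs^T}=\mat{I}$ — that distinct coordinates such as $x_ix_j$ and $x_ix_k$ (and $x_i$ itself) are pairwise uncorrelated under the Rademacher distribution — is the one computation I would carry out carefully, along with confirming that the rank-one structure makes the almost-sure summand bound tight at $\Abs{S}$.
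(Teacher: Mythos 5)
Your proposal is correct and follows essentially the same route as the paper's proof: both apply Tropp's matrix Chernoff bounds with $\delta = \nicefrac{1}{2}$ for the lower tail and $\delta = 1$ for the upper tail, and both extract the crucial lower bound $\mumax \geq 1 - \nu$ from the Rademacher component of the mixture (the paper via $\eigmax \geq \eigmin$ plus concavity of $\eigmin$, you via PSD monotonicity of $\eigmax$ together with $\Exp{\zv \sim \Rad^n}{\zs \zs^T} = \mat{I}$ --- an immaterial variation). Your event-inclusion step $\{\eigmax \geq 2\dmax\} \subseteq \{\eigmax \geq 2\mumax\}$ is exactly what the paper's final inequality uses implicitly, so the argument is complete.
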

\begin{proof}
Let $\mumin \defeq \eigmin(\mat{H}^*_{SS})$ and $\mumax \defeq \eigmax(\Exp{\x}{\zv_S \zv_S^T})$.
First note that for all $\zv \in \{-1, +1\}^n$:
\begin{gather*}
\eigmax(\eta(\vts^T \zs) \zs \zs^T) \leq \frac{\Abs{S}}{4} \defeq R \\
\eigmax(\zs \zs^T) \leq \Abs{S} \defeq R'.
\end{gather*}
Using the Matrix Chernoff bounds from Tropp \cite{tropp2012user}(Theorem 1.1), we have that
\begin{align*}
\Prob{\eigmin(\mat{H}^m_{SS}) \leq (1 - \delta) \mumin} \leq
	 \Abs{S} \left[ \frac{e^{-\delta}}{(1 - \delta)^{1 - \delta}} \right]^{\frac{m\mumin}{R}}.
\end{align*}
Setting $\delta = \nicefrac{1}{2}$ we get that
\begin{align*}
\Prob{\eigmin(\mat{H}^m_{SS}) \leq \nicefrac{\mumin}{2}} 
	\leq  \Abs{S} \left[\sqrt{\frac{2}{e}}\right]^{\frac{4 m \mumin}{\Abs{S}}} 
	\leq \Abs{S} \exp \left(\frac{-m\cmin}{2 \Abs{S}}\right).
\end{align*}
Therefore, we have
\begin{align*}
\Prob{\eigmin(\mat{H}^m_{SS}) > \nicefrac{\cmin}{2}} > 
	1 - \Abs{S} \exp \left(\frac{-m\cmin}{2 \Abs{S}}\right).
\end{align*}
Next, we have that
\begin{align*}
\mumax &= \eigmax(\Exp{\x}{\zv_S \zv_S^T}) 
\geq \eigmin(\Exp{\x}{\zv_S \zv_S^T}) \\
&\geq \nu \eigmin\left(\frac{1}{\Abs{\NE^*}} \sum_{x \in \NE^*} \zs \zs^T\right) +
	(1 - \nu) \\
&\geq (1 - \nu). 
\end{align*}
Once again invoking Theorem 1.1 from \cite{tropp2012user} and setting $\delta = 1$ we have that
\begin{gather*}
\Prob{\eigmax \geq (1 + \delta)\mumax} 
\leq \Abs{S} \left[ \frac{e^\delta}{(1 + \delta)^{1 + \delta}}\right]^{\nicefrac{(m \mumax)}{R'}} \\
\implies \Prob{\eigmax \geq 2 \mumax} 
\leq \Abs{S} \left[ \frac{e}{4}\right]^{\nicefrac{(m \mumax)}{\Abs{S}}} 
\leq \Abs{S} \exp\left( \frac{- m \mumax}{4\Abs{S}} \right) 
\leq \Abs{S} \exp\left( \frac{- m (1 - \nu)}{4\Abs{S}} \right).
\end{gather*}
Therefore, we have that
\begin{align*}
\Prob{\eigmax < 2 \dmax} > 1 - \Abs{S} \exp\left( \frac{- m (1 - \nu)}{4\Abs{S}} \right).
\end{align*}
\end{proof}

\subsubsection{Recovering the Pure Strategy Nash Equilibria (PSNE) Set}
Before presenting our main result on the exact recovery of the PSNE set from noisy
observations of joint actions, we first present a few technical lemmas that would be
helpful in proving the main result.
The following lemma bounds the gradient of the loss function \eqref{eq:loss}
at the true vector $\vt$, for all players.
\begin{lemma}
\label{lemma:grad_bound}
With probability at least $1 - \delta$ for $\delta \in [0, 1]$, we have that
\begin{align*}
\NormInfty{\Grad \loss(\vt, \Data)} < \nu \kappa + \sqrt{\frac{2}{m} \log \frac{2n}{\delta}},
\end{align*}
where $\kappa = \nicefrac{1}{(1 + \exp(\rhomin))}$ and $\rhomin \geq 0$ is the minimum 
payoff in the PSNE set.
\end{lemma}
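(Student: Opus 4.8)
The plan is to split the gradient into its expectation (a bias term) and a mean-zero fluctuation and to control each separately, exploiting that \eqref{eq:grad} exhibits $\Grad\loss(\vt,\Data)$ as an empirical average $\tfrac1m\sum_{l} g(\zl)$ of i.i.d.\ vectors $g(\zl) = -\zl/(1+\exp(\vt^T\zl))$. The key structural observation is that every coordinate of $g(\zl)$ lies in $[-1,1]$: each entry of $\zl$ is $\pm1$ and the logistic factor $1/(1+\exp(\cdot))$ lies in $(0,1)$. Writing $\NormInfty{\Grad\loss(\vt,\Data)} \le \NormInfty{\Exp{\Data}{\Grad\loss(\vt,\Data)}} + \NormInfty{\Grad\loss(\vt,\Data) - \Exp{\Data}{\Grad\loss(\vt,\Data)}}$, it then suffices to bound the expected gradient by $\nu\kappa$ and the fluctuation by $\sqrt{(2/m)\log(2n/\delta)}$, and to combine them by the triangle inequality.

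For the fluctuation term I would apply Hoeffding's inequality coordinatewise. Each coordinate is an average of $m$ independent variables supported in an interval of length $2$, so a fixed coordinate deviates from its mean by at least $t$ with probability at most $2\exp(-mt^2/2)$; taking a union bound over the $n$ coordinates and equating this total to $\delta$ gives deviation strictly below $\sqrt{(2/m)\log(2n/\delta)}$ with probability at least $1-\delta$. This reproduces the second term exactly.

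For the expected-gradient (bias) term I would use the mixture structure \eqref{eq:trans_obs_model}, which expresses the data law as $\nu$ times the uniform law on $\NE^*$ plus $(1-\nu)$ times $\Rad^n$, so that $\Exp{\Data}{\Grad\loss(\vt,\Data)} = \nu\,\Exp{\x\in\NE^*}{g} + (1-\nu)\,\Exp{\Rad^n}{g}$. On the equilibrium component I would invoke Assumption~\ref{ass:payoff}: for $\x\in\NE^*$ the margin $\vt^T\zv$ equals the payoff $x_i(\wtin^T\xin - b_i) \ge \rhomin$, hence $1/(1+\exp(\vt^T\zv)) \le 1/(1+\exp(\rhomin)) = \kappa$, and since every coordinate of $\zv$ is $\pm1$, every coordinate of the equilibrium average is at most $\kappa$ in magnitude, contributing at most $\nu\kappa$. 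For the $\Rad^n$ component I would exploit the sign symmetry $x_i \mapsto -x_i$, which is measure-preserving under $\Rad^n$ and sends $\zv \mapsto -\zv$ and the margin $\vt^T\zv \mapsto -\vt^T\zv$; the intent is to pair each configuration with its $x_i$-flip and cancel this term.

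The hard part will be precisely this last cancellation. Under the flip, $g$ maps to $\zv/(1+\exp(-\vt^T\zv))$ rather than to $-g$, so the paired contributions combine to $\tfrac12\Exp{\Rad^n}{\zv\tanh(\vt^T\zv/2)}$, which need not vanish once the bias $b_i$ is nonzero. Hence the crux of the lemma is arguing that the noise component does not spoil the clean $\nu\kappa$ bound on the expected gradient; the equilibrium bound and the Hoeffding fluctuation bound are routine by comparison. Once the expected gradient is controlled by $\nu\kappa$, the triangle inequality assembles the two pieces into the stated bound with probability at least $1-\delta$.
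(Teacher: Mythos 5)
Your proposal is incomplete, and you have correctly put your finger on exactly where it (and, as it turns out, the paper) breaks. Everything else you do coincides with the paper's route: the paper also splits $\Grad\loss(\vt,\Data)$ into its expectation plus a fluctuation, bounds each coordinate of the fluctuation by Hoeffding with range $2$ and a union bound over $n$ coordinates to get $t=\sqrt{(2/m)\log(2n/\delta)}$, and bounds the equilibrium component of the bias by $\nu\kappa$ using $\vt^T\zv\geq\rhomin$ on $\NE^*$. The step you could not finish --- the Rademacher component of the bias on support coordinates --- is handled in the paper by asserting, for $j\in S$, that $\Exp{\zv\sim\Rad^n}{z_j/(1+\exp((\vts)^T\zs))}\leq\Exp{z_j\sim\Rad}{z_j}=0$. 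But that factorization is only legitimate for $j\notin S$; for $j\in S$ the left-hand side equals $\tfrac12\Exp{}{f(r+v^*_j)-f(r-v^*_j)}$ with $f(u)=1/(1+e^u)$ decreasing and $r$ the contribution of the remaining support coordinates, so it is generally nonzero with sign $-\sign(v^*_j)$, i.e.\ the paper's inequality is one-sided and sign-dependent. Moreover the paper silently drops the minus sign of \eqref{eq:grad}, while the final assembly $\NormInfty{\vect{u}^m}\leq\NormInfty{\vect{u}^m-\Exp{}{\vect{u}^m}}+\NormInfty{\Exp{}{\vect{u}^m}}$ requires the two-sided bound $\Abs{\Exp{}{u^m_j}}\leq\nu\kappa$. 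Your symmetry computation gives the obstruction its exact value, $\tfrac12\Exp{\zv\sim\Rad^n}{\zv\tanh(\vt^T\zv/2)}$; one small correction: its $j$-th coordinate is nonzero whenever $v^*_j\neq 0$, not only when the bias $b_i$ is nonzero --- flipping the single coordinate $z_j$ instead of all of $\zv$ shows this even at $b_i=0$.

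To see that the gap is not fillable by this style of argument, take $n=2$, $w_{12}=w_{21}=1$, $b=0$, so $\NE^*=\{(1,1),(-1,-1)\}$ and $\rhomin=1$; for player $1$, $S=\{1\}$, $\cmin=\eta(1)\approx 0.197$, $\dmax=1$, so even Assumption \ref{ass:payoff} holds ($1>5\cmin/\dmax\approx 0.983$). The first coordinate of the expected gradient at $\vt$ is $-\nu\kappa+(1-\nu)\tfrac12\tanh(1/2)\approx -0.269\,\nu+0.231\,(1-\nu)$, which for, say, $\nu=0.2$ equals $\approx 0.131 > \nu\kappa\approx 0.054$; by the law of large numbers the stated bound then fails with high probability as $m\to\infty$. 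So the lemma as stated can only control the non-support coordinates and the equilibrium part; on support coordinates the bias term genuinely carries an extra $(1-\nu)$-order contribution. Your diagnosis of the crux is therefore not a failure of your write-up to reproduce a known argument --- it is sharper than the paper's treatment of the same point, and any repair must replace $\nu\kappa$ by a quantity that accounts for the $\tanh$ term you isolated (which is what the downstream choice of $\lambda$ in Lemmas \ref{lemma_l2norm_bound} and \ref{lemma_l1norm_bound} would then have to absorb).
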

\begin{proof}
Let $\vect{u}^m \defeq \Grad \loss(\vt, \Data)$ and  $u^m_j$ denote the $j$-th index of
$\vect{u}^m$. We have that
\begin{align*} 
&(\forall\; j \in \Sc \Land j \neq n)\; \Exp{}{u^m_j} \\
&\quad= \frac{\nu}{\Abs{\NE^*}} \sum_{\vect{\x \in \NE^*}} \frac{z_j}{1 + \exp((\vts)^T \vect{z}_S)} + 
   (1 - \nu) \Exp{\zv \sim \Rad^n}{\frac{z_j}{1 + \exp((\vts)^T \vect{z}_S)}}  \\
&\quad= \frac{\nu}{\Abs{\NE^*}} \sum_{\vect{\x \in \NE^*}} \frac{z_j}{1 + \exp((\vts)^T \vect{z}_S)} + 
   (1 - \nu)\left\{ \Exp{\zv_S \sim \Rad^{\Abs{S}}}{\frac{1}{1 + \exp((\vts)^T \vect{z}_S)}} \Exp{z_j \sim \Rad}{z_j} \right\} \\
&\quad\leq \frac{\nu \kappa}{\Abs{\NE^*}} \sum_{\vect{\x \in \NE^*}} x_i x_j \leq \nu \kappa
\end{align*}
Similarly, 
\begin{align*}
&(\forall\; j \in S \Land j \neq n)\; \Exp{}{u^m_j} \\
&\quad= \frac{\nu}{\Abs{\NE^*}} \sum_{\vect{\x \in \NE^*}} \frac{z_j}{1 + \exp((\vts)^T \vect{z}_S)} + 
        (1 - \nu) \Exp{\zv \sim \Rad^n}{\frac{z_j}{1 + \exp((\vts)^T \vect{z}_S)}} \\
&\quad\leq \frac{\nu}{\Abs{\NE^*}} \sum_{\vect{\x \in \NE^*}} \frac{z_j}{1 + \exp((\vts)^T \vect{z}_S)} + 
            (1 - \nu) \Exp{z_j \sim \Rad}{z_j} \\
&\quad\leq \nu \kappa.
\end{align*}
Following the same procedure as above, it can be easily shown that the above bounds hold for the case $j = n$
as well.
Also note that $\Abs{u^m_j} \leq 1$. Therefore, by using the Hoeffding's inequality \cite{hoeffding1963probability}
and a union bound argument we have that:
\begin{align*}
&\Prob{\max_{j=1}^n \Abs{u^m_j - \Exp{}{u^m_j}} < t}  > 1 - 2ne^{\nicefrac{-mt^2}{2}} \\
&\implies \Prob{\NormInfty{\vect{u^m} - \Exp{}{\vect{u^m}}} < t}  > 1 - 2ne^{\nicefrac{-mt^2}{2}} \\
&\implies \Prob{\NormInfty{\vect{u^m}} - \NormInfty{\Exp{}{\vect{u^m}}} < t}  > 1 - 2ne^{\nicefrac{-mt^2}{2}} \\
&\implies \Prob{\NormInfty{\vect{u^m}} < \nu \kappa + t} > 1 - 2ne^{\nicefrac{-mt^2}{2}}. 
\end{align*}
Setting $2n\exp(\nicefrac{-mt^2}{2}) = \delta$, we prove our claim.
\end{proof}
A consequence of Lemma \ref{lemma:grad_bound} is that, even with an infinite 
number of samples, the gradient of the loss function at the true vector $\vt$ doesn't vanish.
Therefore, we cannot hope to recover the parameters of the true game perfectly even with an infinite 
number of samples.
In the following technical lemma we show that the optimal vector $\vh$
for the logistic regression problem is close to the true vector $\vt$ in the support set $S$ of $\vt$. 
Next, in Lemma \ref{lemma_l1norm_bound}, we bound the difference between the
true vector $\vt$ and the optimal vector $\vh$ in the non-support set. The lemmas
together show that the optimal vector is close to the true vector. 
\begin{lemma}
\label{lemma_l2norm_bound}
If the regularization parameter $\lambda$ 
satisfies the following condition:
\begin{gather*}
\lambda \leq \frac{5\cmin^2}{16 \Abs{S} \dmax} - \nu \kappa - \sqrt{\frac{2}{m} \log \frac{2n}{\delta}},
\end{gather*}
then 
\begin{align*}
\NormII{\vts - \vhs} \leq \frac{5 \cmin}{4 \sqrt{\Abs{S}} \dmax},
\end{align*}
with probability at least $1 - (\delta + \Abs{S} \exp (\nicefrac{(-m \cmin)}{2 \Abs{S}}) + 
\Abs{S} \exp (\nicefrac{(-m (1 - \nu))}{4 \Abs{S}}))$.
\end{lemma}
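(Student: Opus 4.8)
The plan is to bound $\dhs \defeq \vhs - \vts$ through a confinement argument based on the convexity of the regularized objective restricted to the support. Consider, over vectors supported on $S$, the convex function
\[
F(\Delta_S) \defeq \loss(\vts + \Delta_S, \Data) - \loss(\vts, \Data) + \lambda\left(\NormI{\vts + \Delta_S} - \NormI{\vts}\right),
\]
whose minimizer is $\dhs$ (the estimator being defined, as usual, by restricting the regularized program to the coordinates in $S$) and which satisfies $F(\vect{0}) = 0$, so that $F(\dhs) \le 0$. The key observation is that if $F(\Delta_S) > 0$ holds for every $\Delta_S$ on the sphere $\NormII{\Delta_S} = B$, with $B \defeq \frac{5\cmin}{4\sqrt{\Abs{S}}\dmax}$, then $\NormII{\dhs} \le B$. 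Indeed, were $\NormII{\dhs} > B$, the point $\Delta' = \frac{B}{\NormII{\dhs}}\dhs$ would lie on the sphere while being a convex combination of $\vect{0}$ and $\dhs$, so convexity would force $F(\Delta') \le 0$, contradicting positivity on the sphere. Thus the whole proof reduces to showing $F > 0$ on the sphere of radius $B$.

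To that end I would lower-bound the three contributions to $F$ separately. For the linear part of the loss increment, Hölder's inequality gives $\Grad\loss(\vts, \Data)^T \Delta_S \ge -\NormInfty{\Grad\loss(\vts, \Data)}\,\NormI{\Delta_S}$, and Lemma~\ref{lemma:grad_bound} bounds $\NormInfty{\Grad\loss(\vts, \Data)} < \nu\kappa + \sqrt{\frac{2}{m}\log\frac{2n}{\delta}}$ on an event of probability $1-\delta$. For the regularizer, the reverse triangle inequality gives $\NormI{\vts + \Delta_S} - \NormI{\vts} \ge -\NormI{\Delta_S}$. Since $\Delta_S$ is supported on $S$, both are expressed through $\NormII{\Delta_S}$ via $\NormI{\Delta_S} \le \sqrt{\Abs{S}}\,\NormII{\Delta_S}$.

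The crux is a quadratic lower bound on the remaining loss increment $\loss(\vts + \Delta_S, \Data) - \loss(\vts, \Data) - \Grad\loss(\vts, \Data)^T\Delta_S$. Expanding to second order with the remainder taken along the segment $\vts + t\Delta_S$, the leading term is $\tfrac{1}{2}\Delta_S^T \mat{H}^m_{SS}\Delta_S \ge \tfrac{\cmin}{4}\NormII{\Delta_S}^2$, using $\eigmin(\mat{H}^m_{SS}) \ge \nicefrac{\cmin}{2}$ from Lemma~\ref{lemma:eigvalue}. The difficulty is that the logistic curvature weights $\eta(\cdot)$ vary along the segment, so the curvature at $\vts$ is not by itself the full increment; controlling the resulting third-order remainder is exactly where the scatter-matrix bound $\eigmax\!\left(\tfrac{1}{m}\sum_{l}\zsl\zsl^T\right) \le 2\dmax$ of Lemma~\ref{lemma:eigvalue} enters. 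Using the boundedness of the derivative of $\eta$ together with this bound, the remainder is of order $\dmax\,\NormI{\Delta_S}\,\NormII{\Delta_S}^2$, and the choice of $B$ — which keeps $\sqrt{\Abs{S}}B = \tfrac{5\cmin}{4\dmax}$ small — ensures the leading quadratic term dominates, leaving an effective curvature of order $\cmin$. I expect this step, showing that the curvature does not degrade over the ball of radius $B$, to be the main technical obstacle.

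Collecting the three bounds yields, on the sphere, an inequality of the form
\[
F(\Delta_S) \ge \frac{\cmin}{4}\NormII{\Delta_S}^2 - \left(\nu\kappa + \sqrt{\frac{2}{m}\log\frac{2n}{\delta}} + \lambda\right)\sqrt{\Abs{S}}\,\NormII{\Delta_S},
\]
up to the lower-order remainder discussed above. Substituting $\NormII{\Delta_S} = B = \frac{5\cmin}{4\sqrt{\Abs{S}}\dmax}$ and using the hypothesised bound on $\lambda$ in the equivalent form $\nu\kappa + \sqrt{\frac{2}{m}\log\frac{2n}{\delta}} + \lambda \le \frac{5\cmin^2}{16\Abs{S}\dmax}$ makes the right-hand side strictly positive, so the confinement argument gives $\NormII{\vts - \vhs} \le B$. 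Finally, the argument is conditioned on the event of Lemma~\ref{lemma:grad_bound} and on the two eigenvalue events of Lemma~\ref{lemma:eigvalue}; a union bound over these three events produces the stated success probability $1 - \big(\delta + \Abs{S}\exp\left(\frac{-m\cmin}{2\Abs{S}}\right) + \Abs{S}\exp\left(\frac{-m(1-\nu)}{4\Abs{S}}\right)\big)$.
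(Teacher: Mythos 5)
Your proposal is correct and takes essentially the same route as the paper's proof: the identical shifted-objective confinement argument (following Lemma 3 of Ravikumar et al.), the same three-term lower bound combining Lemma \ref{lemma:grad_bound} for the gradient term, the reverse triangle inequality for the regularizer, and Lemma \ref{lemma:eigvalue} together with the $\eta'$-perturbation bound on $\mat{A}(\theta)$ for the curvature, finished by the same union bound over the three events. The only bookkeeping difference is that you retain the factor $\tfrac{1}{2}$ in the second-order Taylor remainder (which the paper silently drops), so a strict accounting of your version yields effective curvature $\nicefrac{\cmin}{8}$ rather than $\nicefrac{\cmin}{4}$ after subtracting the $\mat{A}(\theta)$ correction --- a factor-of-two adjustment to the constants that applies equally to the paper's own argument and does not change the structure of the result.
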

\begin{proof}
The proof of this lemma follows the general proof structure of Lemma 3 in \cite{Ravikumar2010}.
First, we reparameterize the $\ell_1$-regularized loss function
\begin{align*}
f(\vect{v}_S) = \loss(\vs) + \lambda \NormI{\vs}
\end{align*}
as the loss function $\widetilde{f}$, which gives the loss at a point
that is $\ds$ distance away from the true parameter $\vts$ as follows:
\begin{align*}
\widetilde{f}(\ds) = \loss(\vts + \ds) - \loss(\vts)
    + \lambda (\NormI{\vts + \ds} - \NormI{\vts}),
\end{align*}
where $\ds = \vect{v}_S - \vts$.
Also note that the loss function $\widetilde{f}$ is shifted such that
the loss at the true parameter $\vts$ is $0$, i.e. $\widetilde{f}(\vect{0}) = 0$.
Further, note that the function
$\widetilde{f}$ is convex and is minimized at $\dhs = \vhs - \vts$,
since $\vhs$ minimizes $f$. Therefore, clearly $\widetilde{f}(\dhs) \leq 0$.
Thus, if we can show that the function $\widetilde{f}$ is strictly positive
on the surface of a ball of radius $b$, then the point $\dhs$ lies
inside the ball i.e. $\Norm{\vhs - \vts}_2 \leq b$. Using the Taylor's theorem
we expand the first term of $\widetilde{f}$ to get the following:
\begin{align}
\widetilde{f}(\ds) &= \Grad \loss(\vts)^T \ds + \ds^T \Hess \loss(\vts + \theta \ds) \ds
   + \lambda (\NormI{\vts + \ds} - \NormI{\vts}), \label{eq:f}
\end{align}
for some $\theta \in [0, 1]$. Next, we lower bound each of the terms in \eqref{eq:f}.
Using the Cauchy-Schwartz inequality, the first term in \eqref{eq:f} is bounded as follows:
\begin{align}
\Grad\loss(\vts)^T\ds &\geq - \NormInfty{\Grad\loss(\vts)} \NormI{\ds} 
	\geq - \NormInfty{\Grad\loss(\vts)} \sqrt{|S|} \NormII{\ds} \notag \\
&\geq - b \sqrt{|S|} \left(\nu \kappa + \sqrt{\frac{2}{m} \log \frac{2n}{\delta}}\right), \label{eq:lb1}
\end{align}
with probability at least $1 - \delta$ for $\delta \in [0, 1]$.
It is also easy to upper bound the last term in equation \ref{eq:f},
using the reverse triangle inequality as follows:
\begin{align*}
\lambda \Abs{\NormI{\vts + \ds} - \NormI{\vts}} \leq \lambda \NormI{\ds}.
\end{align*}
Which then implies the following lower bound:
\begin{align}
\lambda (\NormI{\vts + \ds} - \NormI{\vts}) &\geq - \lambda \NormI{\ds} 
	\geq -\lambda \sqrt{|S|} \NormII{\ds} \notag \\
&= -\lambda \sqrt{|S|} b. \label{eq:lb2}
\end{align}
Now we turn our attention to computing a lower bound of the second term of \eqref{eq:f}, which
is a bit more involved.
\begin{align*}
\ds^T \Hess \loss(\vts + \theta \ds) \ds
&\geq \min_{\NormII{\ds} = b} \ds^T \Hess \loss(\vts + \theta \ds) \ds \\
&= b^2 \eigmin(\Hess \loss(\vts + \theta \ds)).
\end{align*}
Now,
\begin{align*}
\eigmin(\Hess \loss(\vts + \theta \ds)) 
&\geq \min_{\theta \in [0, 1]} \eigmin\left(\Hess \loss(\vts + \theta \ds)\right) \\
&=\min_{\theta \in [0, 1]} \eigmin\left(\frac{1}{m} \sum_{l=1}^m \eta((\vts + \theta \ds)^T \zsl) \zsl (\zsl)^T\right).
\end{align*}
Again, using the Taylor's theorem to expand the function $\eta$ we get
\begin{align*}
\eta((\vts + \theta \ds)^T \zsl)
&= \eta((\vts)^T \zsl) 
    + \eta'((\vts + \bar{\theta} \ds)^T \zsl)(\theta \ds)^T \zsl
\end{align*}
, where $\bar{\theta} \in [0, \theta]$. 
Continuing from above and from Lemma \ref{lemma:eigvalue} we have,
with probability at least $1 - \Abs{S} \exp (\nicefrac{(-m \cmin)}{2 \Abs{S}})$:
\begin{align*}
&\eigmin\left(\Hess \loss(\vts + \theta \ds)\right) \\
&\quad\geq \min_{\theta \in [0, 1]} \eigmin \Bigg(\frac{1}{m} \sum_{l=1}^m \eta((\vts)^T \zsl) \zsl (\zsl)^T \\
    &\qquad + \frac{1}{m} \sum_{l=1}^m \eta'((\vts + \bar{\theta} \ds)^T \zsl)((\theta \ds)^T \zsl) \zsl (\zsl)^T \Bigg) \\
&\quad\geq \eigmin(\mat{H}^m_{SS}) - \max_{\theta \in [0, 1]} \InNormII{\mat{A}(\theta)} 
	\geq \frac{\cmin}{2} - \max_{\theta \in [0, 1]} \InNormII{\mat{A}(\theta)},
\end{align*}
where we have defined
\begin{align*}
\mat{A}(\theta) \defeq \frac{1}{m} \sum_{l=1}^m \eta'((\vts + \theta \ds)^T \zsl)(\theta \ds)^T \zsl \zsl (\zsl)^T.
\end{align*}
Next, the spectral norm of $\mat{A}(\theta)$ can be bounded as follows:
\begin{align*}
\InNormII{\mat{A}(\theta)} 
&\; \leq \max_{\NormII{\yv} = 1} \Bigg\{
    \frac{1}{m} \sum_{l=1}^m \Abs{\eta'((\vts + \theta \ds)^T \zsl)} \Abs{((\theta \ds)^T \zsl)} 
        \times \yv^T (\zsl (\zsl)^T) \yv  \Bigg\} \\
&\; < \max_{\NormII{\yv} = 1} \left\{ \frac{1}{10 m} \sum_{l=1}^m \NormI{(\theta \ds)} \NormInfty{\zsl}
    \yv^T (\zsl (\zsl)^T) \yv \right\} \\
&\; \leq \theta \max_{\NormII{\yv} = 1} \left\{ \frac{1}{10 m} \sum_{l=1}^m \sqrt{|S|} \NormII{\ds}
    \yv^T (\zsl (\zsl)^T) \yv \right\} \\
&\; = \theta b \sqrt{|S|} \InNormII{\frac{1}{10 m} \sum_{l=1}^m \zsl (\zsl)^T} \\
&\; \leq \frac{(b \sqrt{|S|} \dmax)}{5} \leq \frac{\cmin}{4},
\end{align*}
where in the second line we used the fact that $\eta'(.) < \nicefrac{1}{10}$
and in the last line we assumed that $\nicefrac{(b \sqrt{|S|} \dmax)}{5} \leq \nicefrac{\cmin}{4}$ ---
an assumption that we verify momentarily. Having upper bounded the 
spectral norm of $\mat{A}(\theta)$, we have
\begin{align}
\eigmin\left(\Hess \loss(\vts + \theta \ds)\right) \geq \frac{\cmin}{4} \label{eq:lb3}.
\end{align}
Plugging back the bounds given by \eqref{eq:lb1}, \eqref{eq:lb2} and \eqref{eq:lb3}
in \eqref{eq:f} and equating to zero we get
\begin{gather*}
- b \sqrt{|S|} \left(\nu \kappa + \sqrt{\frac{2}{m} \log \frac{2n}{\delta}}\right)
+ \frac{b^2 \cmin}{4} - \lambda \sqrt{\Abs{S}} b = 0 \\
\implies b = \frac{4 \sqrt{\Abs{S}}}{\cmin}
	\left(\lambda + \nu \kappa + \sqrt{\frac{2}{m} \log \frac{2n}{\delta}}\right).
\end{gather*} 
Finally, coming back to our prior assumption we have
\begin{align*}
b = \frac{4 \sqrt{\Abs{S}}}{\cmin}
	\left(\lambda + \nu \kappa + \sqrt{\frac{2}{m} \log \frac{2n}{\delta}}\right)
	\leq \frac{5 \cmin}{4 \sqrt{\Abs{S}} \dmax}.	
\end{align*}
The above assumption holds if the regularization parameter $\lambda$ is bounded
as follows:
\begin{align*}
\lambda \leq \frac{5\cmin^2}{16\Abs{S} \dmax} - \sqrt{\frac{2}{m} \log \frac{2n}{\delta}} - \nu \kappa.
\end{align*}
\end{proof}
\begin{lemma}
\label{lemma_l1norm_bound}
If the regularization parameter $\lambda$ satisfies the following condition:
\begin{align*}
\lambda \geq \nu \kappa + \sqrt{\frac{2}{m} \log \frac{2n}{\delta}},
\end{align*}
then we have that 
\begin{align*}
\NormI{\vh - \vt} \leq \frac{5\cmin}{\dmax}
\end{align*}
with probability at least $1 - (\delta + \Abs{S} \exp (\nicefrac{(-m \cmin)}{2 \Abs{S}}) + 
\Abs{S} \exp (\nicefrac{(-m (1 - \nu))}{4 \Abs{S}}))$.
\end{lemma}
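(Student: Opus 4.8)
The plan is to leverage the optimality of $\vh$ together with the support-set control already established in Lemma \ref{lemma_l2norm_bound}, and to propagate that control to the off-support coordinates $\Sc$ by a standard ``cone'' argument. Concretely, since $\vh$ minimizes the regularized objective $f(\vv) = \loss(\vv) + \lambda\NormI{\vv}$, I would start from the basic inequality $\loss(\vh) + \lambda\NormI{\vh} \le \loss(\vt) + \lambda\NormI{\vt}$ and invoke convexity of the logistic loss to write $\loss(\vh) - \loss(\vt) \ge \Grad\loss(\vt)^T \dhv$, where $\dhv = \vh - \vt$. Combining these two facts with the Cauchy--Schwartz inequality (in its $\ell_\infty$--$\ell_1$ form) yields $\lambda(\NormI{\vt} - \NormI{\vh}) \ge -\NormInfty{\Grad\loss(\vt)}\,\NormI{\dhv}$.

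The next step is to split every $\ell_1$ norm across the support $S$ and its complement $\Sc$. Since $\vt$ is supported on $S$ (so $\vt_{\Sc} = 0$ and $\dhv_{\Sc} = \vh_{\Sc}$), the reverse triangle inequality gives $\NormI{\vh} - \NormI{\vt} \ge \NormI{\dhv_{\Sc}} - \NormI{\dhs}$, while trivially $\NormI{\dhv} = \NormI{\dhs} + \NormI{\dhv_{\Sc}}$. Plugging these in and using the hypothesis on $\lambda$---which by Lemma \ref{lemma:grad_bound}, on the same event of probability at least $1-\delta$, guarantees $\lambda \ge \NormInfty{\Grad\loss(\vt)}$---I would rearrange the resulting inequality into the cone condition $\NormI{\dhv_{\Sc}} \le 3\,\NormI{\dhs}$. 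The delicate point is tracking the exact constant: the factor multiplying $\NormI{\dhs}$ is governed by how large $\lambda$ is relative to the gradient $\ell_\infty$-bound of Lemma \ref{lemma:grad_bound}, and one must verify this factor is at most $3$, since that is precisely the value feeding the final bound.

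With the cone condition in hand the conclusion is immediate. Lemma \ref{lemma_l2norm_bound} bounds the support error in $\ell_2$, and converting to $\ell_1$ via $\NormI{\dhs} \le \sqrt{\Abs{S}}\,\NormII{\dhs}$ yields $\NormI{\dhs} \le \sqrt{\Abs{S}} \cdot \frac{5\cmin}{4\sqrt{\Abs{S}}\dmax} = \frac{5\cmin}{4\dmax}$. Therefore $\NormI{\dhv} = \NormI{\dhs} + \NormI{\dhv_{\Sc}} \le 4\,\NormI{\dhs} \le \frac{5\cmin}{\dmax}$, which is exactly the claimed bound. Crucially, no new randomness enters: the gradient event of Lemma \ref{lemma:grad_bound} (probability $1-\delta$) and the two eigenvalue events of Lemma \ref{lemma:eigvalue} are already absorbed into the probability quoted for Lemma \ref{lemma_l2norm_bound}, so the success probability here is inherited verbatim.

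I anticipate the main obstacle to be the bookkeeping in the cone step: one must peel the $\ell_1$ norms over $S$ and $\Sc$ correctly, apply the reverse triangle inequality in the right direction, and confirm that the assumed lower bound on $\lambda$ relative to the gradient bound is strong enough to force the cone constant to be $3$ rather than merely finite (a ratio that degenerates as $\lambda$ approaches $\NormInfty{\Grad\loss(\vt)}$). Everything else---convexity, the Cauchy--Schwartz step, and the $\ell_2$-to-$\ell_1$ conversion on a set of size $\Abs{S}$---is routine and introduces no further probabilistic events beyond those already controlled.
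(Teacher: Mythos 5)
Your plan reproduces the paper's proof essentially step for step: the same basic inequality from the optimality of $\vh$, the same convexity-plus-Cauchy--Schwartz lower bound $\loss(\vh) - \loss(\vt) \geq -\NormInfty{\Grad\loss(\vt)}\NormI{\Delta}$, the same splitting of $\ell_1$ norms over $S$ and $\Sc$ via the reverse triangle inequality, the same cone condition $\NormI{\dbsc} \leq 3 \NormI{\dbs}$, and the identical closing chain $\NormI{\Delta} \leq 4\NormI{\dbs} \leq 4\sqrt{\Abs{S}}\NormII{\dbs} \leq \nicefrac{5\cmin}{\dmax}$ using Lemma \ref{lemma_l2norm_bound}. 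Your observation that no new probabilistic events are needed is also exactly how the paper accounts for the failure probability.

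The delicate point you flagged, however, is a genuine gap --- and it is one the paper itself glosses over rather than resolves. Writing $c \defeq \NormInfty{\Grad\loss(\vt)}/\lambda$, the argument yields
\begin{align*}
(1 + c)\NormI{\dbs} \geq (1 - c)\NormI{\dbsc},
\end{align*}
so the cone constant is $\nicefrac{(1+c)}{(1-c)}$, which equals $3$ (and gives the needed factor $4$) only when $c \leq \nicefrac{1}{2}$, i.e.\ when $\lambda \geq 2\NormInfty{\Grad\loss(\vt)}$. The lemma's hypothesis combined with Lemma \ref{lemma:grad_bound} guarantees only $\lambda \geq \NormInfty{\Grad\loss(\vt)}$ (i.e.\ $c \leq 1$), under which the cone inequality can degenerate to the vacuous $2\NormI{\dbs} \geq 0$. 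The paper's own display asserts $\loss(\vh) - \loss(\vt) \geq -\frac{\lambda}{2}\NormI{\Delta}$ while citing only ``$\lambda \geq \NormInfty{\Grad\loss(\vt)}$''; that justification supports $-\lambda\NormI{\Delta}$ but not the factor $\nicefrac{1}{2}$. The repair is the standard one you anticipated: strengthen the hypothesis to $\lambda \geq 2\bigl(\nu\kappa + \sqrt{(\nicefrac{2}{m})\log(\nicefrac{2n}{\delta})}\bigr)$, as in Ravikumar et al.\ \cite{Ravikumar2010}, which forces $c \leq \nicefrac{1}{2}$ on the same event of probability $1 - \delta$ and costs only constant-factor adjustments to the window \eqref{eq:reg_param_bounds} in Theorem \ref{thm:psne} (one must also re-verify that the upper bound of Lemma \ref{lemma_l2norm_bound} still leaves that window nonempty). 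So your proposal is as correct as the paper's argument, but to complete it rigorously you must do exactly the doubling you suspected was needed, rather than hope the stated hypothesis suffices.
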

\begin{proof}
Define $\Delta \defeq \vh - \vt$. Also for any vector $\vect{y}$ let
the notation $\vect{y}_{\overline{S}}$ denote the vector $\vect{y}$ with
the entries not in the support, $S$, set to zero, i.e.
\begin{align*}
\left[\vect{y}_{\overline{S}} \right]_i &= \left\{\begin{array}{lr}
y_i & \text{if $i \in S$},\\
0 & \text{otherwise}.
\end{array}\right.
\end{align*}
Similarly, let the notation $\vect{y}_{\overline{\Sc}}$ denote the vector $\vect{y}$
with the entries not in $\Sc$ set to zero, where $\Sc$ is the complement of $S$.
Having introduced our notation and since, $S$ is the support of the true vector $\vt$,
we have by definition that $\vt = \vtbs$. We then have, using the reverse triangle inequality,
\begin{align}
\NormI{\vh} &= \NormI{\vt + \Delta} = \NormI{\vtbs + \dbs + \dbsc} \notag \\
&= \NormI{\vtbs - (- \dbs)} + \NormI{\dbsc} \notag \\
&\geq \NormI{\vt} - \NormI{\dbs} + \NormI{\dbsc}. \label{eq:onenorm_bound1}
\end{align}
Also, from the optimality of $\vh$ for the $\ell_1$-regularized problem we have that
\begin{align}
\loss(\vt) + \lambda \NormI{\vt} &\geq \loss(\vh) + \lambda \NormI{\vh} \notag \\
\implies \lambda(\NormI{\vt} - \NormI{\vh}) &\geq \loss(\vh) - \loss(\vt). \label{eq:onenorm_bound2}
\end{align}
Next, from convexity of $\loss(.)$ and using the Cauchy-Schwartz inequality
we have that
\begin{align}
\loss(\vh) - \loss(\vt) &\geq \Grad \loss(\vt)^T(\vh - \vt) 
	\geq - \NormInfty{\Grad \loss(\vt)} \NormI{\Delta} \notag \\
&\geq - \frac{\lambda}{2} \NormI{\Delta}, \label{eq:onenorm_bound3}
\end{align}
where in the last line we used the fact that $\lambda \geq \NormInfty{\Grad \loss(\vt)}$.
Thus, we have from \eqref{eq:onenorm_bound1},
\eqref{eq:onenorm_bound2} and \eqref{eq:onenorm_bound3} that 
\begin{align}
& \frac{1}{2} \Norm{\Delta}_1 \geq \NormI{\vh} - \NormI{\vt} \notag \\
\implies &\frac{1}{2} \Norm{\Delta}_1 \geq \NormI{\dbsc} - \NormI{\dbs} \notag \\
\implies &\frac{1}{2} \NormI{\dbsc} + \frac{1}{2} \NormI{\dbs} \geq \NormI{\dbsc} - \NormI{\dbs} \notag \\
\implies &3 \NormI{\dbs} \geq \NormI{\dbsc}. \label{eq:onenorm_bound4}
\end{align}
Finally, from \eqref{eq:onenorm_bound4} and Lemma \ref{lemma_l2norm_bound} we
have that
\begin{align*}
\NormI{\Delta} &= \NormI{\dbs} + \NormI{\dbsc} \leq 4 \NormI{\dbs} \leq 4 \sqrt{\Abs{S}} \NormII{\dbs} 
	\leq \frac{5\cmin}{\dmax}.
\end{align*}
\end{proof}

Now we are ready to present our main result on recovering the true PSNE set.
\begin{theorem}
\label{thm:psne}
If for all $i$, $\Abs{S_i} \leq k$, the minimum payoff $\rhomin \geq \nicefrac{5 \cmin}{\dmax}$,
and the regularization parameter and the number of samples satisfy the following conditions:
\begin{gather}
\nu \kappa + \sqrt{\frac{2}{m} \log \frac{6n^2}{\delta}} \leq \lambda 
	\leq 2K + \nu \kappa - \sqrt{\frac{2}{m} \log \frac{6n^2}{\delta}} \label{eq:reg_param_bounds} \\
m \geq \max \Bigg\{ \frac{2}{K^2} \log \left(\frac{6n^2}{\delta}\right),	 
	\frac{2k}{\cmin} \log \left(\frac{3kn}{\delta} \right), \notag \\
		\frac{4k}{1 - \nu} \log \left(\frac{3kn}{\delta} \right) \Bigg\}, \label{eq:sample_complexity}
\end{gather}
where $K \defeq \nicefrac{5\cmin^2}{32k\dmax} - \nu \kappa$, 
then with probability at least $1 - \delta$, for $\delta \in [0, 1]$,
we recover the true PSNE set, i.e. 
$\NE(\widehat{\mat{W}}, \widehat{\vect{b}}) = \NE(\mat{W}^*, \vect{b}^*)$.
\end{theorem}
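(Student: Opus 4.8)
The plan is to convert the parameter-recovery bound of Lemma~\ref{lemma_l1norm_bound} into a \emph{payoff}-recovery bound for every player simultaneously, and then read off the equality of the two PSNE sets. First I would fix a player $i$ and check that the hypotheses \eqref{eq:reg_param_bounds}--\eqref{eq:sample_complexity} are precisely what let me invoke Lemmas~\ref{lemma:eigvalue},~\ref{lemma:grad_bound} and~\ref{lemma_l1norm_bound} at once. Substituting $\Abs{S_i}\le k$ and $K=\nicefrac{5\cmin^2}{32k\dmax}-\nu\kappa$, the right end of \eqref{eq:reg_param_bounds} is $2K+\nu\kappa-\sqrt{\frac{2}{m}\log\frac{6n^2}{\delta}}=\nicefrac{5\cmin^2}{16k\dmax}-\nu\kappa-\sqrt{\frac{2}{m}\log\frac{6n^2}{\delta}}$, which (using $\Abs{S_i}\le k$ and the enlarged confidence) sits below the upper bound on $\lambda$ required by Lemma~\ref{lemma_l2norm_bound}, while the left end is exactly the lower bound on $\lambda$ required by Lemma~\ref{lemma_l1norm_bound}. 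This interval is non-empty iff $2K-2\sqrt{\frac{2}{m}\log\frac{6n^2}{\delta}}\ge 0$, i.e. iff $m\ge\frac{2}{K^2}\log\frac{6n^2}{\delta}$, the first term of \eqref{eq:sample_complexity}; so for an admissible $\lambda$ each player obeys $\NormI{\vhi-\vti}\le\nicefrac{5\cmin}{\dmax}$.

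Next I would handle the probability bookkeeping. The bound of Lemma~\ref{lemma_l1norm_bound} holds on the intersection of the gradient event of Lemma~\ref{lemma:grad_bound} and the two eigenvalue events of Lemma~\ref{lemma:eigvalue}. Allocating $\nicefrac{\delta}{3}$ to each of the three event types and union-bounding over the $n$ players, the remaining two terms of \eqref{eq:sample_complexity} are chosen so that $\Abs{S_i}\exp(\nicefrac{-m\cmin}{2\Abs{S_i}})$ and $\Abs{S_i}\exp(\nicefrac{-m(1-\nu)}{4\Abs{S_i}})$ are each at most $\nicefrac{\delta}{3n}$ (the $\log\frac{3kn}{\delta}$ factors absorb both the player union bound and the $\Abs{S_i}\le k$ prefactor), and the enlarged $\log\frac{6n^2}{\delta}$ inside the gradient deviation yields $\nicefrac{\delta}{3n}$ per player as well. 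Hence, with probability at least $1-\delta$, the estimate $\NormI{\vhi-\vti}\le\nicefrac{5\cmin}{\dmax}$ holds for \emph{every} player at once.

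The geometric core is then short. Writing $u_i^*(\x)=(\vti)^{\top}\zi$ and $\widehat u_i(\x)=(\vhi)^{\top}\zi$ for the true and learned payoffs, and using that $\zi=(x_i\xin,\,x_i)\in\{-1,+1\}^n$ so $\NormInfty{\zi}=1$, H\"older's inequality gives $\Abs{\widehat u_i(\x)-u_i^*(\x)}\le\NormI{\vhi-\vti}\le\nicefrac{5\cmin}{\dmax}$ for every joint action $\x$ and every $i$. For $\NE^*\subseteq\widehat{\NE}$ I would take $\x\in\NE^*$; Assumption~\ref{ass:payoff} gives $u_i^*(\x)\ge\rhomin>\nicefrac{5\cmin}{\dmax}$ for all $i$, so $\widehat u_i(\x)\ge\rhomin-\nicefrac{5\cmin}{\dmax}>0$ and every best-response constraint in \eqref{eq:lig_psne} holds for the learned game, i.e. $\x\in\widehat{\NE}$.

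The reverse inclusion $\widehat{\NE}\subseteq\NE^*$ is where I expect the real difficulty. It is not enough that a deviating player has \emph{some} negative true payoff: a perturbation of size $\nicefrac{5\cmin}{\dmax}$ could push a payoff lying in $(-\nicefrac{5\cmin}{\dmax},0)$ back to non-negative and manufacture a spurious equilibrium. The plan is to use the sign-flip symmetry $u_i^*(\mathbf{x}^{\oplus i})=-u_i^*(\x)$, where $\mathbf{x}^{\oplus i}$ flips only player $i$'s action, so that a violated payoff has the \emph{same magnitude} as the corresponding best-responding payoff. Concretely, for $\x\notin\NE^*$ I would exhibit a deviating player $i$ whose best-responding profile is controlled by Assumption~\ref{ass:payoff}, giving $u_i^*(\x)\le-\rhomin$ and hence $\widehat u_i(\x)\le-\rhomin+\nicefrac{5\cmin}{\dmax}<0$, so $\x\notin\widehat{\NE}$. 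The crux --- the step I would spend the most care on --- is guaranteeing that \emph{every} non-equilibrium admits such a certified deviation, i.e. that the smallest magnitude of a violated payoff is itself at least $\rhomin$; this is exactly where the strict positivity of $\rhomin$ (together with the rescaling freedom noted after Assumption~\ref{ass:payoff}) and the flip symmetry linking non-equilibria back to the equilibria bounded by Assumption~\ref{ass:payoff} must be combined. Once both inclusions hold, $\widehat{\NE}=\NE^*$, i.e. $\NE(\widehat{\mat{W}},\widehat{\vect{b}})=\NE(\mat{W}^*,\vect{b}^*)$, with probability at least $1-\delta$.
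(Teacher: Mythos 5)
Your first three paragraphs are, in substance, exactly the paper's proof of Theorem~\ref{thm:psne}, and they are correct: the identity $2K+\nu\kappa=\nicefrac{5\cmin^2}{16k\dmax}-\nu\kappa$ shows the right end of \eqref{eq:reg_param_bounds} is the admissible range of Lemma~\ref{lemma_l2norm_bound} with $\Abs{S_i}\le k$ and per-player confidence $\nicefrac{\delta}{3n}$ (whence $\log\frac{2n}{\delta}$ becomes $\log\frac{6n^2}{\delta}$); non-emptiness of the $\lambda$-interval is equivalent to $m\ge\frac{2}{K^2}\log\frac{6n^2}{\delta}$, the first term of \eqref{eq:sample_complexity}; the last two terms push $k\exp(\nicefrac{-m\cmin}{2k})$ and $k\exp(\nicefrac{-m(1-\nu)}{4k})$ below $\nicefrac{\delta}{3n}$ each, matching the paper's choice $\delta\mapsto\nicefrac{\delta'}{3n}$; and the inclusion $\NE^*\subseteq\widehat{\NE}$ via H\"older with $\NormInfty{\zi}=1$, Lemma~\ref{lemma_l1norm_bound}, and Assumption~\ref{ass:payoff} is verbatim the paper's argument.

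The gap is in your fourth paragraph, at precisely the step you flagged as the crux. The flip identity $u_i^*(\x^{\oplus i})=-u_i^*(\x)$ is true (it is just linearity of the payoff in $x_i$), but it transfers the margin of Assumption~\ref{ass:payoff} to a violated payoff at $\x\notin\NE^*$ only if the flipped profile $\x^{\oplus i}$ lies in $\NE^*$ --- and it need not: flipping $x_i$ can break other players' best-response constraints, so $\x^{\oplus i}$ is merely a profile at which player $i$ is best-responding, not an equilibrium. Consequently no lower bound on the magnitude of a violated payoff follows from the stated hypotheses: a game in which some $\x\notin\NE^*$ has a single deviating player whose true payoff lies in $(-\nicefrac{5\cmin}{\dmax},\,0)$, while every profile in $\NE^*$ enjoys margin $\rhomin$, is consistent with every assumption of the theorem, and the perturbation bound $\Abs{\vhi^T\zi-(\vti)^T\zi}\le\nicefrac{5\cmin}{\dmax}$ then permits $\vhi^T\zi\ge 0$ for all $i$ at that $\x$, i.e.\ a spurious equilibrium. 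The honest repair is an additional two-sided margin hypothesis, e.g.\ $\Abs{(\vti)^T\zi}\ge\rhomin>\nicefrac{5\cmin}{\dmax}$ for every joint action $\x$ and player $i$ (at least at every violated constraint), after which your intended chain $\vhi^T\zi\le-\rhomin+\nicefrac{5\cmin}{\dmax}<0$ closes the reverse inclusion; since violated payoffs are strictly negative and the action space is finite, such a margin can in principle be arranged by the rescaling you mention, but it must be assumed --- it is not implied by Assumption~\ref{ass:payoff}. For calibration: the paper's own proof never addresses this direction at all; it proves only that every $\x\in\NE^*$ satisfies $\vhi^T\zi\ge 0$ and then asserts the set equality, so your paragraphs 1--3 already reproduce the entire published argument, and the step you isolate is an unproved step of the theorem as stated rather than something you failed to reconstruct.
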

\begin{proof}
From Cauchy-Schwartz inequality and Lemma \ref{lemma_l1norm_bound} we have
\begin{align*}
\Abs{(\vhi - \vti)^T \zi} \leq \NormI{\vhi - \vti} \NormInfty{\zi} \leq \frac{5 \cmin}{\dmax}.
\end{align*}
Therefore, we have that
\begin{align*}
(\vti)^T \zi - \frac{5 \cmin}{\dmax}
        \leq \vhi^T \zi \leq (\vti)^T \zi + \frac{5 \cmin}{\dmax}.
\end{align*}
Now, if $\forall\; \vect{x} \in \NE^*$, $(\vti)^T \zi \geq \nicefrac{5 \cmin}{\dmax}$,
then $\vhi^T \zi \geq 0$.  Using an union bound argument over all players $i$, we can
show that the above holds with probability at least  
\begin{align}
1 - n(\delta + k \exp (\nicefrac{(-m \cmin)}{2k}) + 
	k\exp (\nicefrac{(-m (1 - \nu))}{4k})) \label{eq:thm_main_prob}
\end{align}
for all players. 
Therefore, we have that $\NE(\widehat{\mat{W}}, \widehat{\vect{b}}) = \NE^*$ 
with high probability. Finally, setting $\delta = \nicefrac{\delta'}{3n}$,
for some $\delta' \in [0, 1]$, and ensuring that the last two terms in \eqref{eq:thm_main_prob}
are at most $\nicefrac{\delta'}{3n}$ each, we prove our claim.
\end{proof}
To better understand the implications of the theorem above, 
we discuss some possible operating regimes for learning sparse linear influence games
in the following paragraphs.
\begin{remark}[Sample complexity for fixed $q$]
In the theorem above, if $q$ is constant, which in turn makes $\nu$ constant,
then $K = \BigOm{\nicefrac{1}{k^2}}$, and the sample complexity of learning sparse
linear games grows as $\BigO{k^4 \log n}$.
However, if $q$ is small enough such that $\nu \leq \nicefrac{1}{k}$, then the constant
$\dmax$ is no longer a function of $k$ and hence
$K = \BigOm{\nicefrac{1}{k}}$. Therefore, the sample complexity scales as
$\BigO{k^2 \log n}$ for exact PSNE recovery.
\end{remark}
The sample complexity of $\BigO{\poly(k) \log n}$ for exact recovery of the PSNE set
can be compared with the sample complexity of $\BigO{kn^3 \log^2 n}$ for the maximum likelihood estimate (MLE)
of the PSNE set as obtained by Honorio \cite{Honorio2016}. Note that while the MLE procedure is consistent,
i.e. MLE of the PSNE set is equal to the true PSNE set with probability converging to 1 as
the number of samples tend to infinity, it is NP-hard \footnote{Irfan and Ortiz \cite{irfan14}
showed that counting the number of Nash equilibria is $\#$P-complete. Therefore, computing the
log-likelihood is NP-hard.}.
In contrast, the logistic regression method
is computationally efficient. Further, while the sample complexity of our method
seems to be better than the empirical log-likelihood minimizer as given by Theorem 3 in \cite{Honorio2016},
in this paper we restrict ourselves to LIGs with strictly positive payoff in the PSNE set ---
such games are invariably easier to learn than general LIGs considered by \cite{Honorio2016}.

Honorio \cite{Honorio2016} also obtained lower bounds on the number of samples required by any conceivable 
method, for exact PSNE recovery, by scaling the parameter $q$ with the number of players.
Therefore, in order to compare our results with the information-theoretic limits of learning LIGs,
we consider the regime where the parameter $q$ scales with the number of players $n$.
\begin{remark}[Sample complexity for $q$ varying with number of players]
If we consider the regime where the signal level scales as $q = \nicefrac{(\Abs{\NE^*} + 1)}{2^n}$.
Then, $\dmax = \BigO{\nicefrac{k}{(2^n - \Abs{\NE^*})}}$, 
and as a result $K = \BigOm{\nicefrac{2^n}{k^2}}$.
Therefore, the sample complexity, which is dominated by the second and third terms in 
\eqref{eq:sample_complexity}, is given as $\BigO{k \log (kn)}$.
In general if $q = \BigT{\exp(-n)}$, then the sample complexity
for recovering the PSNE set exactly is $\BigO{k \log (kn)}$.
\end{remark}
Once again we observe that even in the regime of $q$ scaling exponentially with $n$,
the sample complexity of $\BigO{k \log (kn)}$
is better than the information theoretic limit of $\BigO{kn \log^2 n}$, by a factor of $\BigO{n \log n}$. 
This can be attributed to the fact that we consider restricted ensembles of
games with strictly positive payoffs in the PSNE set, as opposed to general LIGs.

Further, from the aforementioned remarks we see that as the signal level $q$ decreases,
the sufficient number of samples needed to recover the PSNE set reduces; 
and in the limiting case of $q$ decreasing exponentially with the number
of players, the sample complexity scales as $\BigO{k \log (kn)}$. This seems
counter-intuitive --- with increased signal level, a learning problem should
become easier and not harder. To understand this seemingly counter-intuitive behavior,
first observe that the constant $\nicefrac{\dmax}{\cmin}$ can be thought of as the
``condition number'' of the loss function given by \ref{eq:loss}, Therefore, the sample complexity
as given by Theorem \ref{thm:psne} can be written as $\BigO{k^2 (\nicefrac{\dmax}{\cmin})^2 \log n}$.
From \eqref{eq:hess_comb}, we see that as the signal level increases, the Hessian of the loss
becomes more ill-conditioned, since the data set now comprises of many repetitions of the few
joint-actions that are in the PSNE set; thereby increasing the dependency ($\dmax$) between actions of players
in the sample data set.

\section{Experiments}
\label{sec:experiments}
\begin{figure}[t]
\centering
\includegraphics[width=\linewidth]{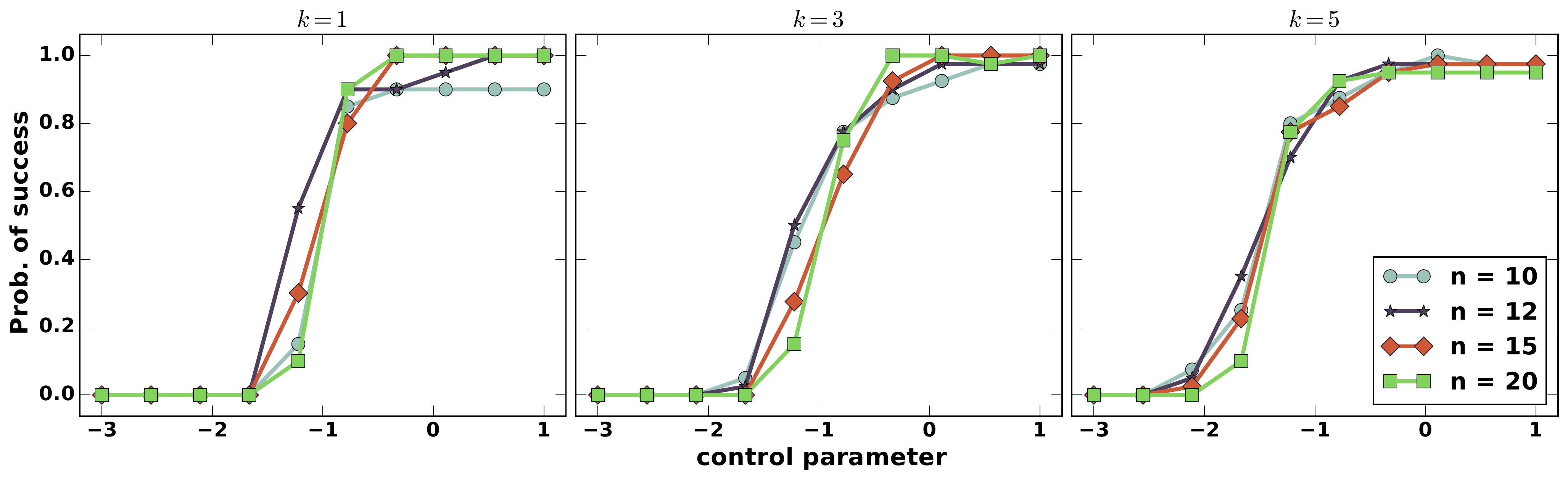}
\caption{The probability of exact recovery of the PSNE set computed across $40$ randomly sampled
LIGs as the number of samples is scaled as $\lfloor (C) (10^{c}) (k^2 \log(\nicefrac{6n^2}{\delta}))\rfloor$,
where $c$ is the control parameter and the constant $C$ is $10000$ for the $k = 1$ case 
and $1000$ for the remaining two case.%
\label{fig:simulation}}
\end{figure}%
In order to verify that our results and assumptions indeed hold in practice,
we performed various simulation experiments. We generated random LIGs for $n$ players
and exactly $k$ neighbors by first creating a matrix $\mat{W}$ of all zeros and then
setting $k$ off-diagonal entries of each row, chosen uniformly at random, to $-1$.
We set the bias for all players to $0$.
We found that any odd value of $k$ produces games with strictly
positive payoff in the PSNE set. Therefore, for each value of $k$ in $\{1, 3, 5\}$,
and $n$ in $\{10, 12, 15, 20\}$, we generated $40$ random LIGs.
The parameters $q$ and $\delta$ were set to the constant value of $0.01$
and the regularization parameter $\lambda$ was set according to Theorem \ref{thm:psne}
as some constant multiple of $\sqrt{(\nicefrac{2}{m}) \log (\nicefrac{2n}{\delta})}$.
Figure \ref{fig:simulation} shows the probability of successful recovery of the PSNE,
for various combinations of $(n, k)$, where the probability was computed
as the fraction of the $40$ randomly sampled LIGs for which the learned PSNE set matched the
true PSNE set exactly.
For each experiment, the number of samples was computed as:
$\lfloor(C) (10^{c}) (k^2 \log(\nicefrac{6n^2}{\delta}))\rfloor$, where $c$ is the control parameter 
and the constant $C$ is $10000$ for $k = 1$ and $1000$ for $k = 3$ and $5$. Thus,
from Figure \ref{fig:simulation} we see that, the sample complexity of $\BigO{k^2 \log n}$ as given
by Theorem \ref{thm:psne} indeed holds in practice --- i.e. there exists constants
$c$ and $c'$ such that if the number of samples is less than $c k^2 \log n$, we
fail to recover the PSNE set exactly with high probability, while if the number of 
samples is greater than $c' k^2 \log n$ then we are able to recover the PSNE set
exactly, with high probability. Further, the scaling remains consistent as the
number of players $n$ is changed from $10$ to $20$.

\section{Conclusion}
\label{sec:conclusion}
In this paper, we presented a computationally efficient and statistically consistent 
method, based on $\ell_1$-regularized logistic regression, for learning linear influence games --- 
a subclass of parametric graphical games with linear payoffs. Under some mild conditions
on the true game, we showed that as long as the number of samples scales
as $\BigO{\poly(k) \log n}$, where $n$ is the number of players and $k$ is
the maximum number of neighbors of any player; then we can recover
the pure-strategy Nash equilibria set of the true game in polynomial time and with probability
converging to 1 as the number of samples tend to infinity. An interesting direction for future work
would be to consider structured actions --- for instance
permutations, directed spanning trees, directed acyclic graphs among others --- thereby extending
the formalism of linear influence games to the structured prediction setting. A more technical extension
would be to consider a local noise model where the observations are drawn from the PSNE set but with each action
independently corrupted by some noise. Other ideas that might be worth pursuing are: considering mixed strategies,
correlated equilibria and epsilon Nash equilibria, and incorporating latent or unobserved
actions and variables in the model.

\bibliographystyle{IEEEtran}
\bibliography{paper}

\end{document}